\let\mathds\mathbb
\newcommand{\E}{\mathbb{E}}
\newcommand{\cent}[0]{\mbox{\textcent}}
\DeclareMathOperator\Tr{Tr}
\begin{document}

\title{Deterministic Construction of QFAs based on the Quantum Fingerprinting Technique}
\author{Aliya~Khadieva$^{1,2}$ \and Mansur~Ziatdinov$^{1,3}$}

\institute{
Institute of Computational Mathematics and Information Technologies, Kazan Federal University, Kremlevskaya ul. 18, Kazan, Tatarstan,
420008 Russia
\and
University of Latvia, Riga, Latvia
\and
Zavoisky Physical-Technical Institute, FRC Kazan Scientific Center of RAS, Kazan, Russian
Federation
                \\ \email{aliyakhadi@gmail.com}
}
\maketitle

\begin{abstract}It is known that for some languages quantum finite automata are more efficient than classical counterparts. Particularly, a QFA recognizing the language $MOD_p$ has an exponential advantage over the classical finite automata. However, the construction of such QFA is probabilistic. In the current work, we propose a deterministic construction of the QFA for the language $MOD_p$. We construct a QFA for a promise problem $Palindrome_s$ and  implement this QFA on the IBMQ simulator using qiskit library tools.
 \\
\textbf{Keywords:} quantum computing, automata, quantum automata, quantum fingerprinting, palindrome, qiskit
\end{abstract}

\section{Introduction}

Quantum finite automaton is a mathematical model for quantum computers with limited memory. A quantum finite
automaton has a finite state space and applies a sequence of transformations corresponding to a letter of an input word
to its current  state. In the end, the state of the quantum automaton is measured, and the input word is accepted or rejected
depending on the outcome of the measurement.

Quantum automata are discussed and compared with classical ones in \cite{af98,ABGKMT06,AKV01,ABFK99,AY15,YS10A,sy2014,gy2018,AK03,ZQGLM13,AY12}.

In 1998, Freivalds and Ambainis presented a technique for construction of quantum finite automata exponentially smaller than classical counterparts for several languages \cite{af98}. In this work, they used a fingerprinting approach for construction of a QFA. This technique is also widely used for different models \cite{av2009,av2008,av2013,kk2017,l2009,l2006,kk2022,aaksv2022,kk2019disj,aavz2016,aakv2018}. 
In papers \cite{an2008,an2009}, Ambainis and Nahimovs used a similar technique and showed an improved exponential separation between quantum and classical finite automata. They constructed a QFA for the language $MOD_p$, define as $MOD_p = \{a^i|i$ is divisible by $p\}$, where $p$ is a prime.

In papers \cite{an2008,an2009}, a construction of the QFA  is probabilistic. It employs a sequence of
parameters  $(k_1 \cdots k_d)$ that are chosen at random and hardwired into the QFA. The idea of construction is the following. The QFA uses a register of $\log d+1$ qubits within one target qubit and $\log d$ control qubits.  A parameter $d$ depends on the probability of computational error.  At the beginning, the register is uniformly distributed on $d$ states. Reading input symbol the automaton makes a series of rotations of the target qubit $|q_{target}\rangle$ by angles $\alpha_i$ for each state $i \in \{0 \ldots d-1\}$. Each $\alpha_i$   depends on the parameter $k_i \in \{k_1 \cdots k_d\}$. 
Existence of such a set $\{k_1 \cdots k_d\}$ of parameters is proven in  \cite{an2008,an2009} by using  Azuma's theorem from Probability Theory.

In  papers  \cite{an2008,an2009}, authors gave two techniques of explicit construction of this set. The first of them gives a QFA with $O(\log p)$ states, but only numerical experiments show the correctness of the QFA. 
The second one gives an explicit construction of a QFA with $O(\log^{2+\epsilon}(p))$ states, where $\epsilon > 0$ is a probability of error. 

Vasiliev, Latypov and Ziatdinov \cite{lvz2017} used simulated annealing and genetic heuristic algorithms for construction of this set of required parameters $\{k_1 \cdots k_d\}$.
The algorithm for construction of a QFA was generalized by Ablayev and Vasiliev \cite{av2009}. They improved the technique and constructed an OBDD for a Boolean function $f=MOD_p(x_1,\dots, x_n)$, where $p$ is any positive  integer, not only prime. An OBDD is a modification of automata where transition functions depend on a position of an input head, and in OBDD, we can read input symbols in different orders \cite{Weg00}. Quantum and classical OBDDs were considered and compared in \cite{agk01,agkmp2005,ag05,a2006,fmp02,aakv2018,ikpy2018,ikpy2021,aakk2018,kk2017,agky14,agky16}.

In the current paper, we consider the same language $MOD_p$  and a QFA with $O(\log p)$ states recognizing this language.   It uses a set of parameters mentioned above. However, we propose a deterministic algorithm for computing the required parameters. It allows an explicit construction of a QFA for $MOD_p$ with $O(\log p)$ states. This technique is based on the results of \cite{wx2008}. 
We use a sequence of functions called pessimistic estimators. The theorem from \cite{wx2008} claims that if the sequence of such pessimistic estimators is defined, then the required set of parameters can be computed deterministically. Using this algorithm, we obtain explicit numbers as parameters for construction of the QFA. 

Additionally, we construct a QFA for the $Palindrome_s$ language as one more example for using this technique. Thus, we propose that this approach can be used for solving a large class of languages.

 The paper is organized in the following way. Section \ref{prelim} contains some preliminaries and previous results on an improved construction of a QFA. In Section \ref{widgerson} we give a method of deterministic finding of a set of parameters. 
 In Section \ref{palindrome}, there is one more example for applying the quantum fingerprinting technique. Here we describe a scheme of the QFA for the $Palindrome_s$ language. Futher, Section \ref{implementation} contains a short description of the QFA implementation on the IBMQ simulator and a link to the repo with the full code. Section \ref{conclusion}  is a conclusion.
%%%%%%%%%%%%%%%%%%%%%%%%%%%%%%%%%%%%%%%%%%%%
%              Preliminaries               %
%%%%%%%%%%%%%%%%%%%%%%%%%%%%%%%%%%%%%%%%%%%%
\section{Preliminaries}\label{prelim} 
We use a definition of a 1-way quantum finite automata (QFA) given in \cite{MC00}. A 1-way QFA is a tuple $M=(Q,\Sigma, \delta, q_0, Q_{acc} , Q_{rej})$ where $Q$ is a finite set of states, $\Sigma$ is an input alphabet, $\delta$ is a transition function, $q_0 \in Q$ is a starting
state, $Q_{acc}$ and $Q_{rej}$ are disjoint sets of accepting and rejecting states and $Q = Q_{acc} \cup Q_{rej}$. $\cent$ and $\mathdollar $ are symbols that do not belong to
$\Sigma$. We use  $\cent$ and $\mathdollar $  as the left and the right end markers, respectively. The working alphabet of $M$ is $\Gamma = \Sigma \cup \{\cent, \mathdollar\}$.

A superposition of $M$ is any element of $l_2(Q)$ (the space of mappings from $Q$ to $\mathds{C}^d$ with $l_2$ norm). For $q \in Q$, $|q\rangle$  denotes
the unit vector with value 1 at the position $q$ and 0
elsewhere. All elements $|\psi \rangle$ of $l_2(Q)$  can be expressed as linear combinations of vectors
$|q\rangle$. We will use $|\psi \rangle$ to denote elements of $l_2(Q)$.

A transition function $\delta$ maps $Q \times \Gamma \times Q$ to $\mathds{C}^d$. The value $\delta(q_1, a, q_2) $ is an amplitude of $|q_2\rangle$ in the superposition of
states to which $M$ goes from $|q_1\rangle$ after reading $a$. For $a \in \Gamma$ , $V_a$ is a linear transformation on $l_2(Q)$ defined by

\[V_a(|q_1\rangle) = \sum_{q_2 \in Q}{\delta(q_1, a, q_2)|q_2\rangle}\]

We require all $V_a$ to be unitary.
The computation of the QFA starts from the superposition $|q_0\rangle$. 
Then transformations corresponding to the left end marker $\cent$,
 the letters of the input word $x$ and the right end marker $\mathdollar$ are applied. 
 The transformation corresponding to $a \in \Gamma$ is just $V_a$.
If the superposition before reading $a$ is $|\psi\rangle$, then the superposition after reading $a$ is $V_a(|\psi\rangle)$.
After reading the right endmarker, the current state $|\psi\rangle$ is measured with respect to the observable $E_{acc} \oplus E_{rej} $ where
$E_{acc}  = span\{|q\rangle : q \in Q_{acc} \}$, $E_{rej} = span\{|q\rangle : q \in Q_{rej} \}$. This observation gives $|\psi \rangle \in E_i$ with the probability equal to the square
of the projection of $|\psi\rangle$ to $E_i$.

After that, the superposition collapses to this projection.
If we get $|\psi\rangle \in E_{acc}$, the input is accepted. If  $|\psi\rangle\in E_{rej}$, the input is rejected.

Let $L$ be a language in the alphabet $\Sigma$. We say a bounded error QFA recognizes a language $L$ iff after reading an input word $x$ the automaton terminates in an accepting state from $ Q_{acc}$.
\begin{itemize}
    \item  with probability more than $1-\epsilon$ if $x \in L$
    \item with probability less than $\epsilon$ if  $x \notin L$
\end{itemize}

In the work, we consider only one-sided error automata. In this case, the words $x \in L$ are accepted with probability 1 and words $x \notin L$ are accepted with probability at most $\epsilon$.

\subsection{Previous results}
Let $p$ be a prime number. We consider the language $MOD_p = \{a^i|i$ is divisible by $p\}$.
It is easy to see that any deterministic 1-way
finite automaton recognizing $MOD_p$ has at least $p$ states.
 Ambainis and Freivalds constructed an efficient QFA in \cite{af98}. They have shown that a QFA with $O(\log p)$ states can recognize $ MOD_p$ with bounded error $\epsilon$.
A big-O constant in this result depends on $\epsilon$. For $x \in MOD_p$, an answer is always correct with probability 1. There is a QFA with $16\log_2 p$ states that is correct with probability at least $1/8$ on inputs $x \notin MOD_p$.
In the general case, \cite{af98} gives a QFA with $poly(1/\epsilon)\log_2 p$ states that is correct with probability at least $1 -\epsilon$  on inputs $x \notin MOD_p$ (where $ poly(z)$ is some polynomial in $z$).

The papers \cite{an2008,an2009} present a simple design of QFAs that achieve a better big-O constant.
Ambainis and Nahimovs show that for any $\epsilon > 0$, there is a QFA with $4 \frac{\log_2 2p}{\epsilon}$
states recognizing $MOD_p$ with probability at least $1 - \epsilon$. Denote this QFA by $M$. It is constructed by combining $d$ subautomata $M_i$, each has 2 states $q_{i,0}$ and $q_{i,1}$. 

The set of states of $M$ consists of $2d$ states $\{q_{1,0}, q_{1,1}, q_{2,0}, q_{2,1},\cdots, q_{d,0}, q_{d,1}\}$. The starting state is $q_{1,0}$.
The set of accepting states $Q_{acc}$ consists of one state $q_{1,0}$. All other states $q_{i,j}$ belong to $Q_{rej}$.
A transformation for the left endmarker $\cent$ is such that
$V_{\cent}(|q_{1,0} \rangle)=\frac{1}{\sqrt{d}}(|q_{1,0} \rangle+|q_{2,0} \rangle+\cdots+|q_{d,0} \rangle).$

A transformation for the input symbol $a$ is
\begin{itemize}
    \item $V_a(|q_{i,0} \rangle) = \cos {\frac{2 k_i \pi }{p}}|q_{i,0} \rangle+\sin {\frac{2 k_i \pi }{p}}|q_{i,1} \rangle$
    \item $V_a(|q_{i,1} \rangle) = -\sin {\frac{2 k_i \pi }{p}}|q_{i,0} \rangle+\cos {\frac{2 k_i \pi }{p}}|q_{i,1} \rangle,$
\end{itemize} where $k_1,k_2,\cdots k_d$ is a sequence of numbers from $\{0,1,\cdot,p-1\}$.

A transformation for the endmarker $\mathdollar$ is the following.
\begin{itemize}
\item The states $ |q_{i,1} \rangle$  are not changed,

\item $V_{\mathdollar} (|q_{i,0} \rangle) = \frac{1}{\sqrt{d}} |q_{1,0} \rangle$ plus some other state.

In particular, 
$V_{\mathdollar}(\frac{1}{\sqrt{d}}(|q_{1,0} \rangle+|q_{2,0} \rangle+\cdots+|q_{d,0} \rangle))= |q_{1,0} \rangle.$
\end{itemize}
If the input word is $a^j$ and $j$ is divisible by $p$, then $M$ accepts the word with probability $1$.
If the input word is $a^j$ and $j$ is not divisible by $p$, then $M$ accepts with probability
\[\frac{1}{d^2} \left(\cos {\frac{2 k_1 \pi j}{p}}+\cos {\frac{2 k_2 \pi j }{p}}+\cdots+\cos {\frac{2 k_d \pi j}{p}}\right)^2.\]
In their proof, the authors use a theorem from Probability Theory (variant of Azuma's theorem):
\begin{theorem}\cite{mrr95}
 Let $X_1, \cdots , X_d$ be independent random variables such that $E[X_i] = 0$ and the value of $X_i$
is always between $-1$ and $1$. Then,

\[Pr\left[| \sum_{i=1}^{d} {X_i}| \geq \lambda\right] \leq 2e^{-\frac{\lambda^2}{2d}}.\]  
\end{theorem}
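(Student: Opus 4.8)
The plan is to prove this by the standard exponential moment (Chernoff--Cram\'er) method; in fact the statement is the symmetric special case of Hoeffding's inequality, so one option is simply to invoke that, but arguing directly is just as short.

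First I would fix a parameter $t > 0$ and apply Markov's inequality to the nonnegative random variable $e^{t\sum_{i} X_i}$:
\[
Pr\left[\sum_{i=1}^{d} X_i \geq \lambda\right] = Pr\left[e^{t\sum_{i=1}^{d} X_i} \geq e^{t\lambda}\right] \leq e^{-t\lambda}\, E\left[e^{t\sum_{i=1}^{d} X_i}\right] = e^{-t\lambda}\prod_{i=1}^{d} E\left[e^{tX_i}\right],
\]
where the last equality uses independence of $X_1,\dots,X_d$.

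The key step is to bound each factor $E[e^{tX_i}]$. Since $x \mapsto e^{tx}$ is convex and $X_i$ takes values in $[-1,1]$, on that interval we have $e^{tx} \leq \tfrac{1+x}{2}e^{t} + \tfrac{1-x}{2}e^{-t}$; taking expectations and using $E[X_i] = 0$ gives $E[e^{tX_i}] \leq \tfrac{1}{2}(e^{t}+e^{-t}) = \cosh t \leq e^{t^2/2}$, where the last inequality follows by comparing the Taylor series of the two sides term by term (using $(2k)! \geq 2^{k}k!$). Substituting back yields $Pr[\sum_{i} X_i \geq \lambda] \leq e^{-t\lambda + dt^2/2}$ for every $t>0$.

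Finally I would optimize the free parameter: the exponent $-t\lambda + dt^2/2$ is minimized at $t = \lambda/d$, producing the bound $e^{-\lambda^2/(2d)}$ on $Pr[\sum_{i} X_i \geq \lambda]$. Running the same argument with $X_i$ replaced by $-X_i$ (these variables again have mean $0$ and lie in $[-1,1]$) bounds $Pr[\sum_{i} X_i \leq -\lambda]$ by the same quantity, and a union bound over the two one-sided tail events gives the factor of $2$ and completes the proof. I expect the only real content to be the convexity bound on the moment generating function — that is the single place where both hypotheses $E[X_i]=0$ and $|X_i|\le 1$ enter; everything else is routine.
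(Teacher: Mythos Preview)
Your argument is correct: this is the standard Chernoff--Cram\'er/Hoeffding derivation, and every step (the convexity bound on $e^{tX_i}$, the estimate $\cosh t \le e^{t^2/2}$, the optimization at $t=\lambda/d$, and the union of the two tails) is sound.

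Compared with the paper, there is nothing to match against: the paper does not prove this theorem at all but simply quotes it from \cite{mrr95} as a known variant of Azuma's inequality and then uses it as a black box. So your write-up supplies a self-contained proof where the paper merely cites one; the approach you give is exactly the textbook one that Motwani--Raghavan themselves present, so in that sense it is consistent with the cited source.
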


Define $X_i$ as $\cos \frac{2 k_i \pi j}{p}$ where each $k_i $ is from $\{0, \cdots p-1\}$. By the theorem, it is possible to choose $k_1, \cdots, k_d$ values to ensure

\begin{equation} \label{eq1}
    \frac{1}{d^2} \left(\sum_{i=1}^d{\cos {\frac{2 k_i \pi j }{p}}}\right)^2< \epsilon.
\end{equation}

This inequality gives a bound for $d = 2\frac{\log_2 2p}{\epsilon}$, and thus, a number of states for $M$ is $4\frac{\log_2 2p}{\epsilon}$.

The proposed QFA construction depends on $d$ parameters and accepts an input word $a^j \notin MOD_p$ with probability \[\frac{1}{d^2} \left(\sum_{i=1}^d{\cos {\frac{2 k_i \pi j }{p}}}\right)^2\]

However, this proof is by a probabilistic argument and does not give an explicit sequence $k_1, \dots, k_d$.

 Two approaches for construction of specific sequences are presented in \cite{an2008,an2009}.
 
 The first one is based on numerical experiments and gives a QFA with $O(\log p)$ states.
 It is based on using so-called cyclic sequences $S_g=\{k_i = g^i$ mod $p\}^d_{i=1}$ where $g$ is a primitive root modulo $p$. The authors checked all $p \in \{2, \dots , 9973\}$, all generators $g$ and all sequence lengths $d < p$. They experimentally compared two strategies: using a randomly chosen sequence and using a
cyclic sequence.   In $99.98\%-99.99\%$ of the experiments, random sequences satisfied the theoretical upper bound (\ref{eq1}), but cyclic sequences
substantially outperformed random ones in almost all the cases. For some $p$, in $1.81\%$ of the cases, the random sequences gave better values of $\epsilon$. The numerical experiments showed that almost all the observed sequences satisfied the bound (\ref{eq1}). 

 However, it is still open 
whether one could find the desired generator without an exhaustive search of all generators for all values of $p$.
 
 The second approach gives an explicit construction of a QFA. This approach is based on a result of Ajtai et al.  \cite{a1990}. Nevertheless, the QFA has $O(\log^{2+3\epsilon}p)$ states that is more than $O(\log p)$.

\section{Deterministic Algorithm for Construction of a QFA} \label{widgerson}

In this section, we suggest an explicit algorithm for deterministic construction of the set of parameters $S=(k_1,\dots,k_d)$.
By using this approach we can explicitly construct a QFA for $MOD_p$ with $O(\log p)$ states. The   QFA is designed as in \cite{an2008,an2009}. However, the set $S$ of required parameters is not randomly chosen, but deterministically computed.

The method is based on an algorithm from \cite{wx2008} which is the explicit algorithm for deterministic construction of a small-biased set.

\subsection{Definitions}

Let us introduce some necessary definitions.

\begin{definition}
  Let us denote \([n]\) the set of integers \(\{0,\ldots,n-1\}\).
\end{definition} 

\begin{definition}
  We let \(A \ge 0\) denote that \(A\) is positive semidefinite (p.s.d.) matrix (i.e. all
  of its eigenvalues are non-negative).

  Let \(A\) and \(B\) be symmetric matrices. Let \(A \le B\) denote that \(B -
  A \ge 0\).

  Let us denote \([A;B]\) the set of all symmetric matrices such that \(A \le
  C\) and \(C \le B\).
\end{definition}

\begin{definition}[Pessimistic estimators]\label{defn:pessimistic}
  Let $\sigma : [p]^d \to \{0,1\}$ be an event.
  
  Pessimistic estimators $\phi_0, \ldots, \phi_d$ are functions $\phi_i : [p]^i
  \to [0,1]$, such that for each $i$ and for each $x_1, \ldots, x_i \in [p]$:
  \begin{equation}\label{eq:probphi}
    \Pr_{X_{i+1},\ldots, X_d}[\sigma(x_1, \ldots, x_i, X_{i+1}, \ldots, X_d) = 0] \le \phi_i(x_1,\ldots,x_i)
  \end{equation}
  and
  \begin{equation}\label{eq:ephi}
    \E_{X_{i+1}} \bigg[ \phi_{i+1}(x_1,\ldots,x_i, X_{i+1}) \bigg] \le \phi_i(x_1, \ldots, x_i)
  \end{equation} 
\end{definition}

\subsection{Algorithm for Finding a Small-Biased Set}

To compute a set of parameters we generalize a problem. Let \(H\) be a group and
\(\chi : H \to \mathbb{C}\) be a characteristic of \(H\). Suppose that we are
able to find a set \(S \subset H\) such that
\[
  \frac 1 {|S|} \bigg| \sum_{s\in S} \chi(s) \bigg| \le \gamma.
\]

If we take \(H = \mathbb{Z}_p\) then \(\chi(h) = \cos \frac{2\pi h}{p} + i \sin \frac{2\pi h}{p}\).

Therefore,
\begin{align*}
  \gamma &\ge \frac 1 {|S|} \bigg| \sum_{s\in S} \chi(s) \bigg| \\
           &= \frac 1 d \bigg| \sum_{j=1}^d \cos \frac{2\pi k_j}{p} + i \sin \frac{2\pi k_j}{p} \bigg| \\
           &\ge \frac 1 d \bigg| \sum_{j=1}^d \cos \frac{2\pi k_j}{p} \bigg|,
\end{align*}
where \(S = \{k_1, \ldots, k_d\}\) is the desired set of parameters.

Given $\gamma < 1$, we want to find  a set $S \subset [p]^d$ such that
\[
  \frac{1}{|S|} \bigg| \sum_{s \in S} \chi(s) \bigg| \le \gamma,
\]
where $p$ is a non-negative number, $d$ is a size of the desired set of
parameters $S$, $\epsilon$ is a probability of error.

The solution for this problem is given by Wigderson \cite{wx2008} in the
following theorem and its proof. Let us reformulate this theorem in our
notation.

\begin{theorem}[\cite{wx2008}]\label{thm:estimators}
  Let $\sigma : [p]^d \to \{0,1\}$ be an event.

  If there are pessimistic estimators $\phi_0, \ldots, \phi_d$ of $\sigma$, then
  we can efficiently find $x_1,\ldots,x_d$ by deterministic algorithm such that
  $\sigma(x_1,\ldots,x_d)=1$.
\end{theorem}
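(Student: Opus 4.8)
The plan is to prove this by the classical method of conditional expectations (Raghavan--Spencer, as adapted by Wigderson): fix the coordinates $x_1, x_2, \ldots, x_d$ one at a time, always maintaining the invariant that the current pessimistic estimator is strictly below $1$, and then argue that once all $d$ coordinates are fixed this invariant forces $\sigma(x_1,\ldots,x_d)=1$. First I would record the base case. The value $\phi_0 \in [0,1]$ bounds $\Pr_{X_1,\ldots,X_d}[\sigma(X_1,\ldots,X_d)=0]$ by \eqref{eq:probphi}; for the construction to get off the ground we need $\phi_0 < 1$, which is exactly what the underlying probabilistic argument behind inequality \eqref{eq1} (the Azuma/MRR-type bound) supplies, since it shows the bad event has probability strictly less than $1$. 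So I start from $\phi_0 < 1$.

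For the inductive step, suppose the algorithm has already chosen $x_1,\ldots,x_i \in [p]$ with $\phi_i(x_1,\ldots,x_i) < 1$. Property \eqref{eq:ephi} gives
\[
  \E_{X_{i+1}}\bigl[\phi_{i+1}(x_1,\ldots,x_i,X_{i+1})\bigr] \le \phi_i(x_1,\ldots,x_i) < 1,
\]
where $X_{i+1}$ is uniform over $[p]$. Since the average of the $p$ numbers $\phi_{i+1}(x_1,\ldots,x_i,t)$ over $t \in [p]$ is below $1$, at least one of them is below $1$; the algorithm evaluates $\phi_{i+1}$ at all $p$ candidate values and picks $x_{i+1}$ attaining the minimum (equivalently, any value not exceeding the average). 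Then $\phi_{i+1}(x_1,\ldots,x_{i+1}) < 1$ and the induction continues.

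After $d$ rounds we obtain $x_1,\ldots,x_d$ with $\phi_d(x_1,\ldots,x_d) < 1$. Applying \eqref{eq:probphi} with $i=d$, where no random variables remain, yields $\Pr[\sigma(x_1,\ldots,x_d)=0] \le \phi_d(x_1,\ldots,x_d) < 1$. But with all inputs fixed this probability is either $0$ or $1$; being strictly below $1$, it equals $0$, hence $\sigma(x_1,\ldots,x_d)=1$, as required. For efficiency, the algorithm performs $d$ rounds, each evaluating $\phi_{i+1}$ at $p$ points (plus at most one expectation computation), so it is a deterministic procedure running in time $O(dp)$ times the cost of a single estimator evaluation.

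I expect the genuinely hard part to lie not in this argument — it is essentially the textbook derandomization scheme — but in the hypothesis of the theorem: exhibiting functions $\phi_0,\ldots,\phi_d$ that are simultaneously valid (both \eqref{eq:probphi} and \eqref{eq:ephi} hold), satisfy $\phi_0 < 1$, and are efficiently computable for the concrete event $\sigma$ derived from inequality \eqref{eq1}. Producing such explicit estimators for the $MOD_p$ construction (and later for $Palindrome_s$) is where the real work lies, and is presumably the content of the sections that follow.
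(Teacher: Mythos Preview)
Your proof is correct and follows essentially the same route as the paper's own argument: fix coordinates one at a time, use \eqref{eq:ephi} to guarantee a choice keeping $\phi_{i+1} \le \phi_i < 1$, and then use \eqref{eq:probphi} at $i=d$ to conclude $\sigma(x_1,\ldots,x_d)=1$. Your write-up is in fact a bit more explicit (spelling out why the average bound yields a good $x_{i+1}$, why the final probability being $<1$ forces it to be $0$, and that $\phi_0<1$ is an implicit hypothesis), but the method is identical.
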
 

\begin{proof}
Pick $x_1, \ldots, x_d$ one by one.

\begin{itemize}
  \item At step $0$ $\phi_0 < 1$.

  \item At step $i$ we have $x_1, \ldots, x_i$ and $\phi_i(x_1,\ldots,x_i) < 1$.
  Enumerate over $x_{i+1} \in [p]$ and choose a value such that
  $\phi_{i+1}(x_1,\ldots,x_{i+1}) \le \phi_i(x_1,\ldots,x_i) < 1$. An existence
  of $x_{i+1}$ is guaranteed by inequality (\ref{eq:ephi}).

  \item At step $d$ we have $x_1, \ldots, x_d$ and $\phi_d(x_1, \ldots, x_d) < 1$. By
  inequality (\ref{eq:probphi}), $\Pr[\sigma(x_1,\ldots,x_d)=0] \le
  \phi_d(x_1,\ldots,x_d) < 1$, therefore $\sigma(x_1,\ldots,x_d) = 1$.
  \end{itemize}
\end{proof} 

The definition of required functions \(f\) can be found in \cite[Theorem
5.2]{wx2008}. Let us reformulate it in our notation.

\begin{theorem}[\cite{wx2008}]
  Let $\gamma < 1$. Let $H$ be a finite abelian group with character $\chi : H \to \mathbb{C}$.  Denote $p = |H|$.

  There exists a set $S$ of size $|S| = O(\frac{1}{\gamma} \log n)$ such that
  \[
    \frac{1}{|S|} \bigg| \sum_{s\in S} \chi(s) \bigg| \le \gamma,
  \]
  and this set can be found deterministically.
\end{theorem}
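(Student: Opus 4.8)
The plan is to reduce this statement to Theorem~\ref{thm:estimators} by constructing explicit pessimistic estimators $\phi_0,\ldots,\phi_d$ for a suitable event $\sigma$, and then invoke that theorem to extract the set $S$ deterministically. First I would fix $d = O(\tfrac{1}{\gamma}\log n)$ with the constant to be pinned down at the end, and define the event $\sigma : [p]^d \to \{0,1\}$ by $\sigma(k_1,\ldots,k_d) = 1$ exactly when $\frac{1}{d}\bigl|\sum_{j=1}^d \chi(k_j)\bigr| \le \gamma$; equivalently, failure ($\sigma = 0$) is the event that the normalized character sum is large. The deterministic construction of $S$ is then immediate from Theorem~\ref{thm:estimators} once the estimators exist, so the whole burden is the estimator construction.

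The key steps, in order. (1) Work in a single character $\chi$ first; since $|\chi(h)| = 1$ and $\chi$ is a group homomorphism into the unit circle, the real and imaginary parts of $\chi$ play the role of bounded independent random variables when the $k_j$ are drawn uniformly and independently from $[p]$, with $\E_k[\chi(k)] = 0$ because a nontrivial character sums to zero over the group. (2) Apply the exponential-moment (Bernstein/Chernoff-style) bound: for a parameter $t > 0$, $\Pr\bigl[\,|\sum_j \chi(k_j)| \ge \gamma d\,\bigr]$ is controlled by $e^{-t\gamma d}\,\E\bigl[\prod_j e^{t\,\mathrm{Re}(\lambda\chi(k_j))}\bigr]$ for an appropriate phase $\lambda$, and this product of expectations is the natural candidate for the ``master'' pessimistic estimator — it factorizes over the coordinates, which is exactly the structure (\ref{eq:ephi}) demands. (3) Define $\phi_i(k_1,\ldots,k_i)$ as the partial product: the already-fixed factors $e^{t\,\mathrm{Re}(\lambda\chi(k_j))}$ for $j \le i$ times $\E[e^{t\,\mathrm{Re}(\lambda\chi(k))}]^{\,d-i}$, scaled by $e^{-t\gamma d}$ (and if one needs a union bound over several ``bad phases'' or over the two coordinates $\mathrm{Re},\mathrm{Im}$, take a sum of a constant number of such products). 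Then (\ref{eq:probphi}) is Markov's inequality applied to the remaining randomness, and (\ref{eq:ephi}) holds with equality by linearity/independence — averaging over $X_{i+1}$ replaces one fixed factor by its expectation, which is precisely the $(d-i)$th factor it was standing in for. (4) Finally choose $t$ and $d$ so that $\phi_0 < 1$: this is the standard optimization $t \asymp \gamma$, $d \asymp \gamma^{-1}\log(\text{number of union-bound terms})$, and since the number of terms is polynomial in $n$ (or in $|H|$, matching the $\log n$ in the statement), $d = O(\tfrac{1}{\gamma}\log n)$ suffices. Invoking Theorem~\ref{thm:estimators} with this $\sigma$ and these $\phi_i$ yields $k_1,\ldots,k_d$ with $\sigma = 1$, i.e. the desired set $S$.

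The main obstacle I expect is handling the \emph{complex-valued} character cleanly: the estimator machinery of Definition~\ref{defn:pessimistic} is phrased for a real-valued success predicate, so I must either (a) discretize the circle of possible phases $\lambda$ and union-bound over a fine enough net — then argue the net size stays polynomial so it only costs a constant factor in $d$ — or (b) bound $|\sum_j \chi(k_j)|$ directly via $|\mathrm{Re}(\cdot)|$ and $|\mathrm{Im}(\cdot)|$ with a $\sqrt 2$ loss, which is cheaper but slightly worsens constants. A secondary subtlety is verifying that $\E_k[e^{t\,\mathrm{Re}(\lambda\chi(k))}] \le e^{C t^2}$ uniformly over $\lambda$ on the unit circle (a sub-Gaussian-type estimate for a bounded, mean-zero variable), which is routine but is where the precise relationship between $t$, $\gamma$ and $d$ gets fixed. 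Everything else — factorization of $\phi_i$, the two defining inequalities, and the deterministic extraction — follows the template already laid out in the proof of Theorem~\ref{thm:estimators}.
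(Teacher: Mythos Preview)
Your scalar pessimistic-estimator construction is sound for a single fixed character and does plug into Theorem~\ref{thm:estimators}, but the paper takes a different route. Rather than work with the complex number $\chi(h)$ and union-bound over phases, the paper passes to the \emph{matrix-valued} function $f(h) = (I - J/p)\tfrac{1}{2}(P_h + P_{h^{-1}})$, where $P_h$ is the $p\times p$ permutation matrix of the regular action of $h$, declares the event $\sigma(x_1,\ldots,x_d)=1$ to be $\tfrac{1}{d}\sum_i f(x_i) \le \gamma I$ in the semidefinite order, and invokes the Ahlswede--Winter/Wigderson--Xiao matrix-Chernoff estimators
\[
\phi_i(x_1,\ldots,x_i)\;=\;p\,e^{-t\gamma d}\,\Tr\!\Bigl(\exp\bigl(t\textstyle\sum_{j\le i} f(x_j)\bigr)\Bigr)\,\bigl\|\E[\exp(tf(X))]\bigr\|^{\,d-i},\qquad t=\gamma/2.
\]
The point of the matrix encoding is that the eigenvalues of $P_h$ are exactly $\chi(h)$ as $\chi$ ranges over \emph{all} characters of $H$, so the spectral-norm event controls every nontrivial character at once; the prefactor $p$ comes from $\Tr I_p$ and is what forces $d$ to carry the $\log p$ so that $\phi_0 \le p\,e^{-\gamma^2 d/4} < 1$.

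This is also where your write-up wobbles: you derive the $\log n$ by saying ``the number of union-bound terms is polynomial in $n$,'' but the only union bound in your argument is over a \emph{constant} number of phases (or over $\mathrm{Re}/\mathrm{Im}$), so nothing in your setup produces a logarithm. For the single character literally named in the statement your method in fact gives the stronger $|S| = O(\gamma^{-2})$; to match the paper --- and, more importantly, to get what the QFA actually needs, namely inequality~(\ref{eq1}) for every $j\in\{1,\ldots,p-1\}$ --- you would have to sum your scalar estimator over all $p-1$ nontrivial characters. Once you do that, your $\sum_{\chi}\exp(\cdot)$ is precisely the trace in the paper's matrix estimator, so the two approaches converge; the matrix formulation simply packages the uniformity over characters from the start and sidesteps the phase-discretisation detour entirely.
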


\begin{proof}
  Let $P_h$ for $h \in H$ be a $p\times p$ permutation matrix of the action of
  $h$ by right multiplication.
  Let matrix-valued $f : H \to [-I_p, I_p]$ be $f(h) = (I - J/p)\frac{1}{2}(P_h
  + P_{h^{-1}})$, where $I_p$
  is identity $p \times p$ matrix and $J$ is the all-one matrix.
  Let $\sigma : [p]^d \to \{0,1\}$ be the event $\sigma(x_1, \ldots, x_d) = 1$
  iff $\frac 1 d \sum_{i=1}^d f(x_i) \le \gamma I$. Let \(t = \gamma / 2\).

  By \cite[Theorem 4.1]{wx2008} the following functions are pessimistic estimators for \(\sigma(x_1,\ldots,x_d)\):
  \begin{eqnarray*}
    \phi_0 &=& pe^{-t\gamma d} \bigg|\bigg| \E [ \exp(tf(X)) ] \bigg|\bigg|^d
               \le pe^{-\gamma^2 d / 4} \\
    \phi_i(x_1,\ldots,x_i) &=& pe^{-t\gamma d} \Tr \bigg( \exp \big( t
                               \sum_{j=1}^i f(x_j) \big) \bigg) \bigg|\bigg| \E [ \exp(tf(X)) ] \bigg|\bigg|^{d-i}
  \end{eqnarray*} 
  Therefore, by Theorem
  \ref{thm:estimators}, there exists polynomial algorithm to find $x_1, \ldots,
  x_d$ such that $\sigma(x_1,\ldots, x_d) = 1$.
\end{proof}

So, we get the following deterministic Algorithm~ \ref{fig:algo} for computing a set of parameters. Here, $p$ is  prime, $d$ is a size of the desired set of parameters $S$, $\gamma$ is a probability of error, $Z$ is an  array of size $p$ consisting of $p \times p$ zero matrices. If we set $\gamma=\sqrt{\epsilon}$, then the condition \ref{eq1} is met.

\begin{algorithm}
  \caption{Algorithm to find a set of parameters}
  \label{fig:algo}
  \begin{algorithmic}
  \REQUIRE \(p, d, \gamma\)
  \ENSURE \(S = \{k_1, \ldots, k_d\}\), s.t. \(\frac 1 d \bigg| \sum_{i=1}^d \cos \frac{2\pi k_i}{p} \bigg| \le \gamma\)
  \STATE \(t \gets \gamma/2\)
  \STATE \(f \gets Z\)
  \FORALL{\(i \in \{1, \ldots, p\}\)}
  \STATE \(f[i] \gets (I - J/n) (P_i + P_{i^{-1}}) / 2\)
  \ENDFOR
  \STATE \(S \gets \{\}\)
  \FORALL{\(i \in \{1,\ldots,d\}\)}
  \STATE \(m \gets 0\)
  \FORALL{\(j \in \{1,\ldots,p-1\}\)}
  \IF { \(\phi_{i+1}(s_1, \ldots, s_i, j) < \phi_i(s_1, \ldots, s_i)\) } 
  \STATE \(m \gets j\)
  \ENDIF
  \ENDFOR
  \STATE \(S \gets S \cup \{m\}\)
  \ENDFOR
  \RETURN \(S\)
  \end{algorithmic}
\end{algorithm}

See Appendix \ref{experiment} for numerical results.

\section{One More Example for Applying  the Quantum Fingerprinting Technique}\label{palindrome}

The paper \cite{akv2008} gives a notation of  Equality-related problems in a context of quantum Ordered Binary Decision Diagrams. Authors apply a new modification of the fingerprinting technique to such problems as Equality,  Palindrome, Periodicity, Semi-Simon, Permutation Matrix Test Function. We conclude that the given deterministic method of finding a required set of parameters works for the mentioned class of problems. In this section, we present a construction of a QFA for a promise problem $Palindrome_s $.

Define the promise problem $Palindrome_s$ as follows. Let $s$ be some even integer, an input is a  string $X$ in the alphabet $\{0,1,\#\}$. The input word is bordered by a left end marker $\cent$ and a right end marker $\mathdollar$. We are promised that after each $s$ input symbols belonging to $\{0,1\}$ a symbol $\#$ appears.
We split the input into so called subwords of size $s$ divided by the symbol $\#$. 
\[x_1x_2\ldots x_{s}  \# x_{s+1} \ldots x_{2s}\# \ldots \# x_{is+1} x_{is+2} \ldots x_{(i+1)s}\# \ldots\]

Define these subwords as $\omega_0, \omega_1, \ldots, \omega_i, \ldots $. If a length of the input is not divisible by $s$, then the remaining part of the input is ignored. 

Define  $Palindrome(x_1, x_2,\ldots x_{s}) \equiv (x_1 x_2\ldots x_{s/2}=x_{s/2+1}x_{s/2+2} \ldots x_{s})$.
\[Palindrome_s \equiv [Palindrome (\omega_0) \& Palindrome (\omega_1) \&  \ldots ]
\]

For solving the promise problem $Palindrome_s$, we consider a QFA that allows  measuring several times. 
Construction of the automaton is based on the technique described above.  Let the automaton be denoted by $P$.
Here, we use a quantum register $|\psi \rangle$ of $t=\lceil \log_2 d \rceil+1$ qubits, where $d=2\frac{\log_2 2 \cdot 2^{s/2}}{\epsilon}$ is a size of the set of parameters $S$. An additional register $|\phi \rangle$ of $\lceil\log_2 s \rceil$ qubits is needed for storing index of an observing symbol in a subword.
The QFA $P$ consists of $2^t\cdot s \approx 2ds$ states.
\[|\psi_1 \psi_2\ldots\psi_{t-1}\rangle |\psi_{target}\rangle |\phi \rangle=|00\ldots0\rangle |0\rangle|0\rangle\]
is an initial state.
An accepting state is $|00\ldots0\rangle |0\rangle|0\rangle$. All other states are rejecting.

Reading the left end marker $P$ maps the initial state to a superposition of $d$ states by applying the  Hadamard transformation to the first $t-1$ qubits.
\[|00\ldots0\rangle |0\rangle|0\rangle \rightarrow{}\frac{1}{\sqrt{d}} \sum_{i=0} ^{d} |i\rangle|0\rangle|0\rangle\]

The register $|\phi\rangle$ is changed after reading a symbol $x_j$ of a subword by applying a transformation
$U: |j\rangle \to |(j+1)\mbox{ }mod\mbox{ } s\rangle$, where $j\in\{0,\dots, s-1\}$.  The $s \times s$ matrix $U$ for the transformation is the following. 
\[U=
\begin{bmatrix}
    0 & 0 & \cdots & 0 & 1\\
    1 & 0 & \cdots & 0 & 0\\
    0 & 1 & \cdots & 0 & 0\\
    && \cdots &&\\
    0 & 0 & \cdots & 1 & 0\\
\end{bmatrix}\]

For a symbol $x_j=1$, the qubit  $|\psi_{target}\rangle$ is transformed  by  rotating $G$ by an angle $\frac{2 \pi k_i 2^{j}}{2^{s/2}}$ if $j < s/2$ (mod $s$), and on by angle $-\frac{2 \pi k_i 2^{s-j-1}}{2^{s/2}}$ if $j \geq s/2$ (mod $s$).  The transformation is the following.
\[G=
\begin{bmatrix}
    \cos \alpha       & \sin\alpha \\
   -\sin\alpha      & \cos\alpha
\end{bmatrix},\]

where $\alpha=\frac{2 \pi k_i 2^{j}}{2^{s/2}}$ if $j < s/2$ (mod $s$),

$\alpha=-\frac{2 \pi k_i 2^{s-j-1}}{2^{s/2}}$ if $j \geq s/2$ (mod $s$).

For $x_j=0$, the system is not transformed.

The set  $S=\{k_1,k_2 \ldots k_d\}$ can be computed in a way described in the previous section.

Reading the symbol $\#$ the automaton maps the first $t-1$ qubits to the $|00\ldots0\rangle$ by applying the Hadamard transformation, and $|\psi_{target}\rangle$ is measured. If the measured value is $|0\rangle$, then the computation is continued on the next subword starting from the same initial state $|00\ldots0\rangle |0\rangle|0\rangle$. Otherwise,  $P$ stops and rejects the input.

\begin{theorem}
If the input word $x \in Palindrome_s $, then $P$ accepts it
with probability $1$.
If  $x \notin Palindrome_s $, then $P$ accepts it with  probability $\epsilon$. 
\end{theorem}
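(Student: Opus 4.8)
The plan is to reduce the behavior of the automaton $P$ on each subword to the $MOD_p$-type analysis already carried out in Section~\ref{prelim}, with $p = 2^{s/2}$, and then to combine the per-subword guarantees using the multiple-measurement structure. First I would verify the completeness case: if $x \in Palindrome_s$, every subword $\omega_i$ satisfies $Palindrome(\omega_i)$, i.e. $x_{is+1}\ldots x_{is+s/2} = x_{is+s/2+1}\ldots x_{(i+1)s}$. I would track the amplitude of $|\psi_{target}\rangle$ within each branch $|k\rangle$: reading the first half of a subword applies a rotation by $\sum_{j<s/2, x_j=1} \frac{2\pi k_i 2^j}{2^{s/2}}$, which is $\frac{2\pi k_i}{2^{s/2}}\cdot\mathrm{val}(x_1\ldots x_{s/2})$, while reading the second half applies the negative of $\frac{2\pi k_i}{2^{s/2}}\cdot\mathrm{val}(x_{s/2+1}\ldots x_s)$ (here the exponents $2^{s-j-1}$ encode the mirrored bit positions). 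When the two halves are equal these rotations cancel exactly, so $|\psi_{target}\rangle$ returns to $|0\rangle$ in every branch; applying the Hadamard on reading $\#$ restores the register and the intermediate measurement yields $|0\rangle$ with certainty. Hence the computation survives every subword and the final state is the accepting state with probability $1$.

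Next I would handle the soundness case. If $x \notin Palindrome_s$, there is at least one subword $\omega_i$ with $Palindrome(\omega_i)$ false, i.e. $\mathrm{val}(\text{first half}) \neq \mathrm{val}(\text{second half})$ as elements of $\mathbb{Z}_{2^{s/2}}$; set $j^* = \mathrm{val}(\text{first half}) - \mathrm{val}(\text{second half}) \not\equiv 0 \pmod{2^{s/2}}$. Just before the intermediate measurement for that subword, the state restricted to branch $|k\rangle$ has picked up a net rotation by angle $\frac{2\pi k_i j^*}{2^{s/2}}$, so the probability of measuring $|0\rangle$ (i.e. of not rejecting) on this subword is
\[
\frac{1}{d^2}\left(\sum_{i=1}^d \cos\frac{2\pi k_i j^*}{2^{s/2}}\right)^2.
\]
This is exactly the quantity bounded in inequality~(\ref{eq1}) with $p$ replaced by $2^{s/2}$, provided the set $S = \{k_1,\ldots,k_d\}$ with $d = 2\frac{\log_2(2\cdot 2^{s/2})}{\epsilon}$ is chosen as in Section~\ref{widgerson} (taking $\gamma = \sqrt{\epsilon}$ in Algorithm~\ref{fig:algo} with group $H = \mathbb{Z}_{2^{s/2}}$). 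Therefore this subword is accepted (not rejected) with probability at most $\epsilon$, and since $P$ accepts the whole input only if it survives \emph{every} intermediate measurement, the overall acceptance probability is at most $\epsilon$.

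A few details need care. I must confirm that after a $|0\rangle$ outcome on a subword the state is genuinely reset to $|00\ldots0\rangle|0\rangle|0\rangle$ so that subwords are processed independently: the Hadamard on reading $\#$ maps the uniform superposition on the first $t-1$ qubits back to $|00\ldots0\rangle$, the register $|\phi\rangle$ has cycled back to $|0\rangle$ after exactly $s$ symbols by unitarity of $U$, and conditioning on the $|0\rangle$ measurement outcome collapses $|\psi_{target}\rangle$ to $|0\rangle$ — so the post-measurement state is indeed the initial state and the analysis of the next subword is unaffected by earlier ones. I would also note that the promise (a $\#$ after every $s$ symbols from $\{0,1\}$) guarantees $|\phi\rangle$ is always back at $|0\rangle$ when $\#$ is read, so $U$ never desynchronizes. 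The main obstacle is the bookkeeping in the soundness case: one must show that the exponents $2^j$ for $j < s/2$ and $-2^{s-j-1}$ for $j \ge s/2$ genuinely realize the difference $\mathrm{val}(\text{first half}) - \mathrm{val}(\text{second half})$ modulo $2^{s/2}$ inside the rotation angle, so that the per-subword acceptance probability matches the $MOD_{2^{s/2}}$ expression verbatim and the bound from Section~\ref{widgerson} applies with no loss. Once that identification is made, the theorem follows from the previously established construction.
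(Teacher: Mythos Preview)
Your proposal is correct and follows essentially the same approach as the paper: you identify the net rotation angle on each branch as $\frac{2\pi k_i(q-r)}{2^{s/2}}$ (your $j^*$ is the paper's $q-r$), obtain the acceptance probability $\frac{1}{d^2}\bigl(\sum_i \cos\frac{2\pi k_i(q-r)}{2^{s/2}}\bigr)^2$ on a non-palindromic subword, and invoke inequality~(\ref{eq1}) with the parameter set from Section~\ref{widgerson}. The paper's proof is terser; your added checks on the post-measurement reset and the $|\phi\rangle$-register synchronization are just explicit verifications of points the paper leaves implicit, and your bound $\epsilon$ is the paper's $\epsilon^m<\epsilon$ for $m\ge 1$ bad subwords.
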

\begin{proof}
  If $x \in Palindrome_s$, then each of its subwords is a palindrome. Initially, $|\psi_{target}\rangle=|0\rangle$. By the construction of the QFA, after reading each $\omega_i$, the automaton $P$ maps $|\psi_{target}\rangle$ to itself.
  After reading the right end-marker, $P$ gets into the state $|00\ldots0\rangle |0\rangle|0\rangle$. That is the only accepting state.
  
  If $x \notin Palindrome_s$, then at least one subword of the input is a non-palindrome. Let it be $\omega_g=x_1x_2\ldots x_s$. After reading $\omega_g$ and measuring $|\psi_{target}\rangle$, we get $|0\rangle$ with probability

  \[\frac{1}{d^2}\left(\sum_{i=0}^d \cos \frac{2 \pi k_i (q-r)}{2^{s/2}}\right)^2<\epsilon, \quad \textrm{where} \quad q=\sum\limits_{j=0}^{s/2-1} 2^{j}x_j, r= \sum\limits_{j=s/2}^{s-1}2^{s-j-1} x_j. \]

  Thus, $P$ accepts the wrong word with the probability less than $\epsilon$. 
  
  If there are $m$ such wrong subwords in the input, and measurements of $|\psi_{target}\rangle$ are independent, then the probability of acceptance of the wrong input is less than $\epsilon^m<\epsilon$. \end{proof}

\section{QFA implementation }\label{implementation}

In this section, we present a circuit  implementation of the automaton for the language $Palindrome_s$ for any $s$ using qiskit on a quantum simulator.  Several works are related with QFA implementation on noisy devices using 3 qubits such as  \cite{bsocy2021,sy2021}.  

Full program code can be found by the link \url{https://gitlab.com/aliya.khadi/palindromep-implementation/-/blob/main/Palindrome_qfa.ipynb}.

In the algorithm, we use multi-qubit controlled rotation. The circuit for this operation is decomposed into a circuit consisting of XGates (generalized  multi-qubit controlled not gate) and Ry gates. 
As we set a probability of error to $\epsilon$, the required number of qubits depends on $\epsilon$ and $s$. Let us consider how algorithm works when the number of qubits  is 4.

For example, the input word "$1001\#1111\#0110$" is accepted by the QFA with probability 1. "$11110111\#00011000\#00011101$" is rejected in 97 cases out of 100 ones.

\begin{longtable}{| c | c | c | }
\hline
\(w\) & \(result\) & \(error\) 

\\\hline
\endfirsthead
\hline
\endfoot

"$1001\#0000\#0101$"& is rejected & 0\\
 "$1001\#1111\#0110$"& is accepted & 0\\
"$110011\#000000\#000101$" &is rejected&0.04\\
"$11110011\#00011000\#00011101$"& is rejected & 0\\
"$11110111\#00011000\#00011101$"& is rejected & 0.03\\
"$11100111\#00000000\#10011001$"& is accepted & 0\\
"$0100110010\#0000100001\#1001001001\#1001001001$"& is rejected & 0.01\\
"$0100110010\#0000110000\#1001001001\#1101001011$"& is accepted & 0\\
\end{longtable}

\section{Conclusion}\label{conclusion}

The fingerprinting technique can be applied for a QFA solving a  promise problem $Palindrome_s$. This result shows the efficiency of quantum approach for solving the class of Equality-related problems. The algorithm is implemented by using qiskit tools for any values of $p$ and $\epsilon$.  

\paragraph*{Acknowledgements.}
We thank Kamil Khadiev and Abuzer Yakary{\i}lmaz for fruitful discussions and
help. 
The reported study was funded by the government assignment for FRC Kazan Scientific Center of RAS. This work was done during QCourse-511 held by the QWorld  Association.  A part of the study was funded by the subsidy allocated to
Kazan Federal University for the state assignment in the sphere of
scientific activities, project No. 0671-2020-0065.

\appendix \label{appen}
\section{Numerical Results}\label{experiment}

Parameters computing were performed on the following computer:
\begin{description}
\item [CPU:] Intel Core i7-5930K with 6 cores (12 threads) and 3.50 GHz frequency
\item [RAM:] 64 Gb
\item [GPU:] NVIDIA GPU GeForce GTX 1080 (GP104-A) with 8 Gb memory
\item [OS:] Ubuntu 18.04.1 LTS
\item [Octave version:] 4.2.2
\item [GPU drivers:] nvidia-drivers-415.23
\item [CUBLAS version:] 9.1
\end{description}

Results of numerical experiments are summarized in the following table.

The algorithm run for given parameters \(p\) and \(d\), and \(\epsilon\) was set
to \(\epsilon = 0.2\).

\begin{longtable}{| c | c | p{10cm} | c | }

\caption{Results of numerical experiments. Columns \(p\) and \(d\) contain input parameters. Column \(S\) contains \(d\) numbers --- the sequence of parameters that was found.  \(\epsilon=0.2\) 
}\\\hline
\(p\) & \(d\) & \(S\) 

\\\hline
\endfirsthead

\caption{(continued)}\\\hline
\(p\) & \(d\) & \(S\) 

\\\hline
\endhead

\hline
\endfoot

11 & 45 & [1, 4, 2, 5, 3, 1, 3, 5, 2, 4, 1, 3, 5, 2, 4, 5, 2, 1, 3, 4, 1, 4, 2, 3, 5, 1, 3, 5, 4, 2, 2, 5, 1, 3, 4, 2, 3, 4, 1, 5, 2, 3, 4, 5, 1]\\
17 & 51 & [2, 8, 3, 4, 7, 1, 5, 6, 2, 8, 5, 1, 6, 4, 3, 7, 1, 4, 6, 8, 3, 2, 5, 7, 1, 4, 6, 8, 3, 2, 5, 7, 1, 4, 6, 8, 3, 2, 5, 7, 1, 4, 7, 2, 5, 8, 6, 3, 1, 4, 6]\\
23 & 56 & [2, 5, 6, 9, 10, 1, 3, 7, 11, 4, 8, 1, 5, 8, 11, 2, 9, 4, 6, 7, 3, 10, 5, 2, 6, 9, 10, 1, 3, 7, 11, 4, 8, 1, 5, 8, 11, 2, 9, 4, 6, 7, 3, 10, 7, 11, 10, 8, 9, 6, 5, 4, 3, 1, 2, 2]\\
29 & 59 & [1, 8, 5, 11, 14, 2, 12, 4, 9, 7, 6, 10, 3, 13, 1, 8, 5, 11, 14, 2, 12, 4, 9, 7, 6, 10, 13, 3, 13, 2, 10, 7, 1, 4, 5, 8, 11, 14, 12, 9, 3, 6, 1, 11, 3, 5, 9, 7, 13, 14, 12, 10, 8, 6, 4, 2, 4, 3, 9]\\
31 & 60 & [1, 13, 8, 3, 6, 11, 4, 15, 9, 10, 2, 14, 5, 7, 12, 2, 5, 15, 6, 12, 9, 8, 1, 13, 11, 4, 3, 10, 14, 7, 7, 2, 6, 10, 11, 15, 3, 12, 1, 8, 14, 5, 4, 9, 13, 1, 12, 3, 5, 10, 8, 14, 6, 15, 4, 7, 13, 2, 9, 11]\\
37 & 63 & [1, 14, 6, 10, 2, 18, 11, 15, 7, 3, 5, 9, 13, 17, 16, 12, 8, 4, 1, 8, 11, 6, 15, 13, 10, 17, 3, 4, 18, 12, 5, 2, 9, 16, 7, 14, 8, 1, 11, 6, 15, 13, 10, 17, 3, 4, 18, 12, 5, 2, 9, 16, 14, 7, 12, 15, 16, 2, 5, 8, 9, 18, 1]\\
41 & 64 & [3, 14, 2, 20, 8, 15, 9, 4, 10, 16, 19, 13, 7, 1, 5, 11, 17, 18, 6, 12, 1, 9, 6, 19, 17, 4, 14, 12, 11, 7, 16, 2, 20, 3, 15, 8, 10, 13, 5, 18, 6, 13, 4, 1, 16, 11, 18, 8, 20, 9, 3, 15, 14, 2, 10, 19, 7, 5, 12, 17, 3, 14, 2, 20]\\
47 & 66 & [2, 11, 8, 7, 23, 20, 14, 5, 17, 1, 21, 4, 18, 15, 10, 12, 13, 9, 16, 6, 19, 3, 22, 3, 8, 15, 19, 2, 14, 9, 20, 21, 4, 10, 16, 22, 13, 7, 1, 5, 11, 17, 23, 18, 6, 12, 7, 15, 19, 1, 10, 23, 2, 6, 11, 20, 3, 14, 16, 18, 12, 5, 22, 8, 9, 21]\\
53 & 68 & [1, 8, 13, 17, 11, 15, 20, 6, 24, 10, 22, 3, 26, 4, 19, 12, 18, 5, 25, 2, 21, 9, 14, 16, 7, 23, 3, 7, 15, 13, 24, 1, 19, 11, 9, 17, 5, 21, 23, 25, 26, 22, 20, 18, 16, 14, 12, 10, 8, 6, 2, 4, 1, 8, 13, 17, 11, 15, 20, 6, 24, 10, 22, 3, 26, 4, 19, 12]\\
59 & 69 & [17, 15, 18, 4, 10, 12, 9, 20, 23, 7, 28, 1, 25, 26, 2, 6, 14, 22, 29, 21, 13, 5, 3, 11, 19, 27, 24, 16, 8, 8, 14, 5, 12, 29, 23, 27, 1, 10, 21, 25, 3, 16, 19, 6, 18, 17, 7, 28, 4, 20, 15, 9, 26, 2, 22, 13, 11, 24, 1, 9, 22, 26, 6, 19, 29, 12, 2, 16, 5]\\
61 & 70 & [1, 11, 16, 19, 7, 13, 28, 25, 4, 10, 30, 22, 2, 27, 5, 24, 8, 21, 18, 14, 15, 17, 12, 20, 9, 23, 6, 26, 3, 29, 4, 17, 3, 15, 28, 9, 10, 22, 16, 21, 2, 27, 8, 14, 20, 26, 29, 23, 11, 5, 1, 7, 13, 19, 25, 30, 24, 18, 12, 6, 11, 1, 7, 26, 16, 21, 3, 30, 17, 12]\\
67 & 71 & [11, 2, 19, 14, 18, 26, 10, 6, 23, 27, 31, 3, 32, 15, 7, 28, 24, 22, 1, 20, 5, 16, 30, 9, 12, 33, 13, 8, 29, 17, 4, 21, 25, 22, 13, 8, 20, 3, 18, 24, 1, 15, 29, 6, 33, 17, 31, 10, 27, 4, 26, 11, 19, 12, 25, 5, 32, 2, 28, 9, 21, 16, 14, 23, 30, 7, 15, 18, 25, 29, 1]\\
71 & 72 & [4, 34, 28, 27, 25, 11, 13, 5, 35, 19, 20, 3, 29, 12, 21, 26, 18, 10, 2, 6, 14, 22, 30, 33, 17, 9, 1, 7, 15, 23, 31, 32, 24, 8, 16, 12, 31, 13, 10, 4, 33, 32, 15, 34, 14, 11, 9, 16, 35, 8, 7, 17, 30, 6, 18, 29, 5, 19, 28, 20, 27, 3, 21, 26, 2, 22, 25, 1, 23, 24, 1, 21]\\
83 & 74 & [17, 6, 8, 38, 35, 1, 20, 22, 33, 4, 36, 24, 15, 10, 29, 31, 40, 13, 3, 27, 26, 34, 19, 11, 41, 12, 18, 5, 25, 28, 2, 32, 21, 9, 39, 14, 16, 37, 7, 30, 23, 1, 11, 29, 34, 8, 14, 31, 24, 36, 27, 6, 4, 39, 26, 9, 19, 21, 16, 41, 37, 32, 22, 17, 12, 7, 2, 3, 13, 18, 23, 28, 33, 38]\\
97 & 76 & [9, 5, 21, 24, 22, 41, 6, 37, 7, 40, 8, 25, 23, 10, 26, 38, 39, 27, 42, 43, 11, 44, 28, 4, 12, 20, 36, 45, 29, 13, 3, 19, 35, 46, 30, 14, 2, 18, 34, 47, 31, 15, 1, 17, 33, 48, 32, 16, 22, 12, 35, 7, 4, 37, 17, 36, 9, 6, 47, 20, 34, 32, 21, 19, 45, 48, 23, 8, 46, 18, 33, 5, 10, 38, 31, 24]\\

\\\hline
\end{longtable}

\end{document}